\newlength\figureheight
\newlength\figurewidth
\theoremstyle{plain}
\newtheorem{theorem}{Theorem}
\newtheorem{lemma}{Lemma}
\newtheorem{corollary}{Corollary}
\theoremstyle{definition}
\newtheorem{definition}{Definition}
\newtheorem{assumption}{Assumption}
\theoremstyle{remark}
\newtheorem{remark}{Remark}
\newcommand{\set}[1]{\mathcal{#1}}
\newcommand{\dist}{\mathcal{Q}} 
\newcommand{\setst}[2]{\left\{ #1 \, \middle| \, #2 \right\}}
\newcommand{\tp}{{\mkern-1.5mu\mathsf{T}}}
\DeclareMathOperator{\E}{\mathbb{E}}
\DeclareMathOperator{\var}{\text{var}}
\DeclareMathOperator{\diag}{\text{diag}}
\title{\LARGE \bf
Stochastic Model Predictive Control for Linear Systems using Probabilistic Reachable Sets
}
\author{Lukas Hewing, Melanie N. Zeilinger
\thanks{This work was supported by the Swiss National Science Foundation under grant no. PP00P2 157601 / 1.}
\thanks{All authors are with the Institute for Dynamic Systems and Control, ETH Z\"urich.
        {\tt\footnotesize [lhewing|mzeilinger]@ethz.ch}}%
}
\begin{document}

\maketitle
\thispagestyle{empty}
\pagestyle{empty}

\begin{abstract}
  In this paper, we propose a stochastic model predictive control (MPC)
  algorithm for linear discrete-time systems affected by
  possibly unbounded additive disturbances and subject to probabilistic constraints.
  Constraints are treated in analogy to robust MPC using a
  constraint tightening based on the concept of probabilistic reachable sets,
  which is shown to provide closed-loop fulfillment of chance constraints
  under a unimodality assumption on the disturbance distribution.
  A control scheme reverting to a backup solution from a previous
  time step in case of infeasibility is proposed,
  for which an asymptotic average performance bound is derived.
  Two examples illustrate the approach, highlighting closed-loop
  chance constraint satisfaction and the benefits of the proposed controller
  in the presence of unmodeled disturbances.
\end{abstract}
%
%
\section{Introduction}
\label{sec:introduction}
%
Robust model predictive control (MPC) methods are well-established
for dealing with bounded disturbances in a principled way~\cite{Bemporad1999a}.
For some problems, however, more detailed information about
the disturbance is available, e.g.\ in terms of a probability distribution.
Moreover, if the considered disturbance distribution has infinite support,
e.g.\ the commonly employed Gaussian distribution,
there does not exist a finite upper bound on the disturbance realizations,
limiting the applicability of robust approaches.
These observations motivate stochastic MPC methods,
which enable a potentially less conservative treatment of uncertainties
by taking knowledge of the distributions into account~\cite{Mesbah2016}.

Stochastic MPC methods can be classified into two main categories~\cite{Farina2016}:
\emph{randomized} approaches rely on the generation of a sufficient number
of disturbance realizations or \emph{scenarios},
whereas \emph{analytic approximation} methods
reformulate the problem in a deterministic form.
In this paper, we focus on the latter and propose an analytic approximation
method for linear time-invariant (LTI) systems under additive disturbances.
Previous work includes approaches based on stochastic tubes~\cite{Cannon2011},
or using a constraint tightening~\cite{Kouvaritakis2010, Korda2011},
some of which have recently been unified in~\cite{Lorenzen2017}.
These techniques rely on boundedness of the disturbances
in order to establish recursive feasibility, but enable a less conservative
tightening of constraints which only need to hold in probability.
Disturbance distributions of infinite support were in turn considered e.g.\
in~\cite{Cannon2008, Cannon2009, Farina2013, Farina2015} and~\cite{Paulson2017}.
The techniques typically rely on \emph{backup solutions}
in case the original MPC problem becomes infeasible.
In the case of~\cite{Cannon2008, Cannon2009} this is achieved by solving
an optimization problem with the objective of reducing constraint violations.
In~\cite{Farina2013, Farina2015} the MPC problem is instead initialized
at a specific state guaranteeing feasibility,
whereas~\cite{Paulson2017} considers a soft constrained formulation.

This paper presents a stochastic MPC approach for general disturbance
distributions with possibly infinite support using
probabilistic reachable sets (PRS) for constraint tightening,
as well as a control scheme for ensuring recursive feasibility,
for which a noise-dependent bound on the closed-loop cost can be derived.
The PRS serve a similar purpose as robust invariant sets in tube-based robust 
MPC and offer a flexible framework for stochastic MPC,
which allows for the consideration of general disturbance distributions and 
constraint sets.
The resulting stochastic MPC method inherently guarantees
a weak form of chance constraint satisfaction,
as e.g.\ used in previous approaches~\cite{Farina2013, Farina2015},
which we call \emph{predictive} satisfaction.
Under a unimodality assumption on the disturbance distribution and
for symmetric PRS, the method is shown to also guarantee
chance constraint satisfaction in a stronger sense,
termed \emph{closed-loop} satisfaction, which was not shown for
previous approaches~\mbox{\cite{Cannon2008}-\cite{Paulson2017}}.

Potentially unbounded disturbances can lead to
feasibility problems if the MPC is initialized at the currently measured state
$x(k)$, which we handle by choosing a suitable backup initialization.
The concept is similar to the approach in~\cite{Farina2013,Farina2015},
but applies the backup scheme only in case of infeasibility
without any further requirements, e.g.\ on a cost decrease.
We derive an asymptotic average cost bound for the resulting MPC controller,
providing a notion of convergence and stability in closed-loop,
and show in simulation examples that this update scheme offers advantages
over updates conditional on an additional cost decrease.

The paper is organized as follows.
Section~\ref{sec:preliminiaries} states the considered system to be controlled
and reviews notions of multivariate unimodality as relevant to
the presented approach. Section~\ref{sec:ReachableSets} introduces
the concept of probabilistic reachable sets, which forms the basis of
the stochastic MPC approach presented in Section~\ref{sec:MPC}.
Simulation examples are given in Section~\ref{sec:examples}
and the paper ends with concluding remarks in Section~\ref{sec:conclusions}.
%
\section{Preliminaries}
\label{sec:preliminiaries}
%
\subsection{Notation}
\label{subsec:Notation}
We refer to quantities of the system realized in closed-loop at time $k$
using parentheses, e.g.\ $x(k)$ is the state measured at time step $k$,
while quantities used in the MPC prediction are indexed with subscript,
e.g.\ $x_i$ is the system state predicted $i$ time steps ahead.
In order to specify the time at which the prediction is made, we use $x_i(k)$.
The weighted 2-norm is $\Vert x \Vert_P = \sqrt{x^\tp P x}$,
and $P \succ 0$ refers to a positive definite matrix.
The notation $ \set{A} \ominus \set{B} =
\setst{ a \in \set{A}}{a+b \in \set{A} \ \forall b \in \set{B}}$
refers to the Pontryagin set difference.
The distribution $\dist$ of a random variable $x$ is specified as $x \sim \dist$,
probabilities and conditional probabilities are denoted $\Pr(A)$, $\Pr(A \, | \, B)$
and the expected value and variance are $\E(x)$ and $\var(x)$, respectively.
%
\subsection{Considered System}
\label{subsec:ConsideredSystem}
%
We consider the problem of regulating an LTI system
subject to additive disturbances
\begin{align} \label{eq:system}
x(k\!+\!1) &= A x(k) + Bu(k) + w(k) \, ,
\end{align}
with state $x(k) \in \mathbb{R}^{n_x}$, inputs $u(k) \in \mathbb{R}^{n_u}$ and
disturbances $w(k) \in \mathbb{R}^{n_x}$,
which are assumed to be i.i.d.\ with distribution $w(k) \sim  \dist^w$.
The system is subject to chance constraints on both the states and inputs, i.e.\
\begin{subequations}\label{eq:chanceConstraints}
  \begin{align}
    \Pr(x(k) \in  \set{X}) \geq p_x \, ,\\
    \Pr(u(k) \in  \set{U}) \geq p_u \, ,
  \end{align}
\end{subequations}
where $\set{X}$ and $\set{U}$ are convex sets containing the origin.
Throughout the paper, the initial state of the system is considered known, 
such that the probabilities are conditional given the initial state, which is, 
however, omitted for simplicity of notation.
Note that~\eqref{eq:chanceConstraints} includes the case of hard constraints,
e.g.\ on the inputs, by imposing a probability of~1.
In general, however, hard constraints can only be satisfied
if the disturbance distribution has bounded support.

For the majority of results in this paper, we require no assumptions
on the nature of the disturbance distribution $\dist^w$.
In order to guarantee satisfaction of~\eqref{eq:chanceConstraints}
for the closed-loop system (Section~\ref{subsec:constraintTightening}), however,
we require the disturbance distribution to be multivariate unimodal,
the main properties of which are summarized in the following.
%
\subsection{Multivariate Unimodality}
\label{subsec:unimodality}
%
\begin{definition}[Monotone Unimodality~\cite{Dharmadhikari1976}]
 A distribution $\dist$ in $\mathbb{R}^{n_x}$ is called monotone unimodal
 if for every symmetric convex set $\set{R} \subset \mathbb{R}^{n_x}$
 and every nonzero $x \in \mathbb{R}^{n_x}$ the probability
 $\nobreak{\Pr(w+kx \in \set{R})}$ with $w\sim \dist$
 is non-increasing in $k \in [0,\infty]$.
\end{definition}
This property similarly holds if $x$ is a random variable.
\begin{lemma}\label{lm:conv}
Let the random variables $w$ and $x$ be independent
and the distribution of $w$ be monotone unimodal, then
\[
\Pr(w \in \set{R}) \geq \Pr(w + x \in \set{R}) \, ,
\]
for any convex symmetric set $\set{R}$.
\end{lemma}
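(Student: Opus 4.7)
The plan is to reduce the lemma to the deterministic version by conditioning on $x$ and then integrating. The key observation is that monotone unimodality, as stated, applies to a fixed deterministic shift $kx$; Lemma~\ref{lm:conv} generalizes this to random shifts, and independence is exactly what is needed to carry the deterministic statement through an average.

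First I would use the tower property together with the independence assumption to write
\[
\Pr(w + x \in \set{R}) = \E_x\bigl[\Pr(w + x \in \set{R} \mid x)\bigr],
\]
where, by independence, the conditional law of $w$ given $x=\bar{x}$ is still $\dist^w$. Next, for each fixed realization $\bar{x}$ I would invoke the monotone unimodality of $\dist^w$: applying the definition with the direction $\bar{x}$ (when $\bar{x}\neq 0$) and evaluating the non-increasing function at $k=0$ and $k=1$ gives
\[
\Pr(w + \bar{x} \in \set{R}) \leq \Pr(w \in \set{R}).
\]
The case $\bar{x}=0$ is trivial with equality, so the pointwise bound holds for every realization of $x$.

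Finally I would integrate this pointwise inequality over the distribution of $x$:
\[
\Pr(w + x \in \set{R}) = \E_x\bigl[\Pr(w + \bar{x} \in \set{R})\big|_{\bar{x}=x}\bigr] \leq \E_x\bigl[\Pr(w \in \set{R})\bigr] = \Pr(w \in \set{R}),
\]
where the last equality holds because the inner quantity no longer depends on $x$. This gives the claimed bound.

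I do not anticipate any serious obstacle; the only subtlety is making the conditioning argument rigorous, in particular justifying measurability of $\bar{x} \mapsto \Pr(w + \bar{x} \in \set{R})$ so that the expectation is well defined. This follows from the convexity (hence Borel measurability) of $\set{R}$ and standard arguments for indicator functions of Borel sets, and could be dispatched in one sentence or even left implicit given the applied setting of the paper.
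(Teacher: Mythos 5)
Your proposal is correct and follows essentially the same route as the paper's proof: the paper writes $\Pr(w+x\in\set{R})$ as an integral of $\Pr(w+\bar{x}\in\set{R})$ against the density of $x$ (via the convolution of densities and Fubini), applies the monotone unimodality inequality pointwise in $\bar{x}$, and integrates --- exactly your conditioning argument, just phrased with explicit densities instead of the tower property. Your version is marginally more general in that it does not presuppose absolutely continuous distributions, but the key step (evaluating the non-increasing map at $k=0$ and $k=1$ for each fixed shift) is identical.
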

\begin{proof}
  See Appendix.
\end{proof}
A related, but stronger, notion of multivariate unimodality is
\emph{central convex unimodality}.
\begin{definition}[Central Convex Unimodality~\cite{Dharmadhikari1976}]
  A distribution $\dist$ in $\mathbb{R}^{n_x}$ is called central convex unimodal
  if it is in the closed convex hull of the set of all uniform distributions
  on symmetric compact convex bodies in $\mathbb{R}^{n_x}$.
\end{definition}
\begin{theorem}[\cite{Dharmadhikari1976}]\label{thm:conv-monotone}
  Every central convex unimodal distribution is monotone unimodal.
\end{theorem}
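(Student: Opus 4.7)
The plan is to first verify the monotone unimodality property for the building-block distributions---the uniform distributions on symmetric compact convex bodies---and then extend it to every distribution in their closed convex hull by noting that the defining monotonicity is preserved under convex combinations and weak limits.

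For the base case, fix a symmetric compact convex body $K \subset \mathbb{R}^{n_x}$ and let $w$ be uniform on $K$. For any symmetric convex $\set{R}$ and any nonzero $x \in \mathbb{R}^{n_x}$, I would study
\[
\phi(k) = \Pr(w + kx \in \set{R}) = \frac{\mathrm{vol}\bigl(K \cap (\set{R} - kx)\bigr)}{\mathrm{vol}(K)}.
\]
Two observations suffice. First, $\phi$ is even: since $-K = K$ and $-\set{R} = \set{R}$, the set $K \cap (\set{R} + kx)$ is the reflection of $K \cap (\set{R} - kx)$ through the origin and hence has the same Lebesgue measure. Second, $\phi$ is log-concave in $k$: the set $\{(k,y) \in \mathbb{R} \times \mathbb{R}^{n_x} : y \in K,\ y + kx \in \set{R}\}$ is convex, and the Pr\'ekopa--Leindler inequality (equivalently Brunn--Minkowski) applied to its $k$-marginal yields log-concavity of $\phi$. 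Because an even log-concave function on $\mathbb{R}$ is non-increasing on $[0,\infty)$---for $0 \leq k_1 \leq k_2$, write $k_1 = \lambda k_2 + (1-\lambda)(-k_2)$ with $\lambda = (k_1+k_2)/(2 k_2) \in [\tfrac{1}{2},1]$, so that $\phi(k_1) \geq \phi(k_2)^\lambda \phi(-k_2)^{1-\lambda} = \phi(k_2)$ by evenness---the base case is complete.

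For the extension, let $\dist$ be central convex unimodal, so by definition $\dist$ is a weak limit of convex combinations of uniforms on symmetric compact convex bodies. The map $\dist \mapsto \Pr_{w \sim \dist}(w + kx \in \set{R})$ is linear in $\dist$, so a nonnegative convex combination of distributions each satisfying the base case again yields a function that is a nonnegative linear combination of non-increasing functions, hence non-increasing on $[0,\infty)$. For the weak limit, the same map is continuous in $\dist$ for sufficiently regular sets, and pointwise limits preserve monotonicity; boundary issues can be handled by standard inner/outer approximation of $\set{R}$ by symmetric convex sets. This transfers the property from the generating family to the entire closed convex hull, yielding the claim.

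The main obstacle is the base case, and within it the log-concavity of the slice volume $k \mapsto \mathrm{vol}(K \cap (\set{R} - kx))$. This is essentially Anderson's classical theorem and ultimately rests on an application of Brunn--Minkowski / Pr\'ekopa--Leindler; the remaining steps (symmetry, even-plus-log-concave implies non-increasing, and propagation through convex combinations and weak limits) are routine bookkeeping.
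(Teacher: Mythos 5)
The paper does not actually prove this statement---it is imported verbatim from Dharmadhikari and Jogdeo (1976)---so there is no in-paper argument to compare against. Your proof is correct and is essentially the classical route to this result: the base case for a uniform distribution on a symmetric compact convex body $K$ is Anderson's lemma, and your derivation of it is sound (the set $\{(k,y): y\in K,\ y+kx\in\set{R}\}$ is convex, so Brunn--Minkowski/Pr\'ekopa--Leindler gives log-concavity of the slice volume $\phi(k)$; symmetry of $K$ and $\set{R}$ gives evenness; and your interpolation $k_1=\lambda k_2+(1-\lambda)(-k_2)$ correctly converts even-plus-log-concave into monotonicity on $[0,\infty)$). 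Linearity of $\dist\mapsto\Pr_{w\sim\dist}(w+kx\in\set{R})$ handles finite convex combinations. The only place requiring genuine care is the passage to weak limits, since $\Pr_{\dist_n}(A)\to\Pr_{\dist}(A)$ only for $\dist$-continuity sets $A$; the standard fix you allude to works---replace $\set{R}$ by its open $\epsilon$-neighborhoods $\set{R}^{\epsilon}$ (still symmetric convex, with pairwise disjoint boundaries over $\epsilon$, so all but countably many are continuity sets simultaneously for $k_1$ and $k_2$), pass to the limit in $n$, then let $\epsilon\downarrow 0$; this yields the claim for closed $\set{R}$, and non-closed or lower-dimensional symmetric convex sets are handled by a further inner approximation. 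Flagging this as routine is acceptable, but if you were writing this out in full it is the one step you should spell out, since the monotonicity inequality for $k_1$ and $k_2$ pulls the closed/open Portmanteau bounds in opposite directions and the naive limit does not go through without the continuity-set device.
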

Additionally, central convex unimodal distributions are closed under linear
transformation, convolution with another central convex unimodal distribution
and marginalization~\cite{dharmadhikari1988}.
A prominent family of distributions that are central convex unimodal
are log-concave distributions.
\begin{definition}[Log-concave Distribution~\cite{Saumard2014}]
A distribution $\dist$ in $\mathbb{R}^{n_x}$ is called log-concave,
if its probability density function is given by $f = \exp(\phi)$,
where $\phi$ is a concave function.
\end{definition}
\begin{theorem}[\cite{dharmadhikari1988}]
  Every centrally symmetric, absolutely continuous log-concave distribution
  is central convex unimodal.
\end{theorem}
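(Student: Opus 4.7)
My plan is to use the layer-cake (superlevel-set) decomposition of the density and show that all relevant superlevel sets are symmetric compact convex bodies, so that the distribution becomes a mixture of uniform distributions on such sets.

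Let $f = \exp(\phi)$ with $\phi$ concave be the density, and assume $f(x) = f(-x)$. For every $t > 0$ define the superlevel set
\[
\set{S}_t = \setst{x \in \mathbb{R}^{n_x}}{f(x) \geq t}.
\]
The first step is to verify that, for each $t > 0$, the set $\set{S}_t$ is a symmetric compact convex body. Convexity follows from concavity of $\phi$ and monotonicity of $\exp$, since $\set{S}_t = \{\phi \geq \log t\}$ is a superlevel set of a concave function. Symmetry follows directly from $f(x) = f(-x)$. Compactness follows because $\int f = 1$ and $\phi$ concave force $\phi(x) \to -\infty$ as $\|x\| \to \infty$, so $\set{S}_t$ is bounded for each $t > 0$; it is also closed by continuity of $f$ (absolute continuity and log-concavity yield a continuous density on its support). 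For Lebesgue-a.e.\ $t$ in the range of $f$, $\set{S}_t$ has nonempty interior and hence is a genuine convex body.

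The second step is the layer-cake identity
\[
f(x) = \int_0^{\infty} \mathbf{1}_{\set{S}_t}(x)\, dt,
\]
together with the fact that $\int f(x)\, dx = 1$ implies $\int_0^\infty V(t)\, dt = 1$, where $V(t) = \mathrm{vol}(\set{S}_t)$. Rewriting,
\[
f(x) = \int_0^{\infty} V(t)\, \frac{\mathbf{1}_{\set{S}_t}(x)}{V(t)}\, dt,
\]
exhibits $f$ as a continuous convex combination (with weights $V(t)\,dt$) of uniform densities on the symmetric compact convex bodies $\set{S}_t$. Hence the distribution associated with $f$ lies in the convex hull of uniform distributions on symmetric compact convex bodies, in the sense of mixtures.

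The last step, and the most delicate one, is to interpret this continuous mixture as an element of the \emph{closed} convex hull appearing in the definition of central convex unimodality. The natural route is to approximate the integral by finite Riemann-type sums: partition $(0,\infty)$ into intervals, replace the integral by a finite sum $\sum_j V(t_j)\,\Delta t_j \, \mathrm{Unif}(\set{S}_{t_j})$, and argue that the resulting finite convex combinations of uniforms on symmetric convex bodies converge (weakly, say) to the target distribution as the mesh is refined. The main obstacle I anticipate is making this limiting argument rigorous in the correct topology used by the definition of central convex unimodality; the rest is essentially the geometry of log-concave functions and is routine.
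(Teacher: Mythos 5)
The paper does not prove this statement; it is quoted directly from the cited reference, so there is no in-paper argument to compare against. Your layer-cake proof is the standard one for this result and is essentially correct. Two points deserve tightening. First, closedness of $\set{S}_t$: a log-concave density need not be continuous at the boundary of its support, so $\set{S}_t$ as you define it need not be closed; but its closure is still a symmetric compact convex set and differs from $\set{S}_t$ only within the boundary of a convex body, which is Lebesgue-null, so replacing each $\set{S}_t$ by its closure changes nothing in the layer-cake identity. Second, the passage from the continuous mixture to the \emph{closed} convex hull, which you rightly flag as the delicate step, is cleaner without Riemann sums: the closed convex hull $\mathcal{C}$ of the uniform distributions on symmetric compact convex bodies is a weakly closed convex set of probability measures, and any such set is stable under mixing --- if $\mu_t \in \mathcal{C}$ for all $t$ and $\nu$ is a probability measure on the index set, then $\int \mu_t \, \nu(dt) \in \mathcal{C}$, because every closed half-space of measures (defined by a bounded continuous test function) that contains all the $\mu_t$ also contains the mixture. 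Your Riemann-sum route can be made to work as well, using the nestedness of $t \mapsto \set{S}_t$ and monotonicity of $V(t)$ to control the convergence of the sums, but as written that step is a sketch rather than a proof; with either completion the argument is sound.
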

Log-concave distributions are closed under affine transformation,
truncation over convex sets and marginalization~\cite{Saumard2014}.
\begin{remark}
The class of log-concave distributions is fairly rich and, e.g.,
includes multivariate Gaussian distributions.
\end{remark}
%
%
\section{Probabilistic Reachable Sets}
\label{sec:ReachableSets}
%
In order to satisfy the chance constraints~\eqref{eq:chanceConstraints},
we make use of probabilistic analogies of robust reachable sets and
MPC techniques based on constraint tightening.
For defining the required components and their properties,
consider an autonomous LTI system under additive disturbances
\begin{equation}
  x(k\!+\!1) = A_K x(k) + w(k) \, , \label{eq:sysLinError}
\end{equation}
with $x(k) \in \mathbb{R}^{n_x}$, i.i.d. $w(k) \sim \dist$ and stable matrix $A_K$,
for which we define the following probabilistic notions of reachability.
%
\subsection{Definitions}
\label{subsec:reachSetDefinition}
%
\begin{definition}[Probabilistic $n$-step Reachable Set]
A set $\set{R}^n$ with $n \geq 0$ is said to be a probabilistic
$n$-step reachable set ($n$-step PRS) of probability level $p$
for system~\eqref{eq:sysLinError} if
\[
	x(0) = 0 \Rightarrow \Pr(x(n) \in \set{R}^n) \geq p\, .
\]
\end{definition}
\begin{definition}[Probabilistic Reachable Set\label{def:PRS}]
	A set $\set{R}$ is said to be a probabilistic reachable set (PRS)
  of probability level $p$ for system~\eqref{eq:sysLinError} if
	\[
	x(0) = 0 \Rightarrow \Pr(x(n) \in \set{R}) \geq p\ \ \forall n \geq 0. \,
	\]
\end{definition}
From these definitions it follows that a PRS can be obtained from
\begin{equation}
  \set{R} = \bigcup\limits_{n=1}^{\infty} \set{R}^n \, . \label{eq:Runion}
\end{equation}
For many disturbance distributions, the $n$-step reachable set satisfies
a nestedness property, which simplifies the computation
according to $\eqref{eq:Runion}$ as outlined below.
\subsection{Nestedness}
\label{subsec:Nestedness}
It is well-known that for LTI systems the infinite-time robust reachable set
with initial state at the origin coincides with the minimal robust invariant
set~\cite{Blanchini1999} and that the sequence of reachable sets is nested,
i.e.~the $n\!-\!1$-step reachable set is a subset of the $n$-step reachable set.
In the stochastic setting, these properties in general do not hold.
Under the assumption that the disturbance follows a
central convex unimodal distribution, however,
we can recover a similar nestedness result for probabilistic reachable sets.
\begin{lemma}\label{lm:nestedness}
If $\dist$ is central convex unimodal,
any convex symmetric $n$-step PRS $\set{R}^n$ is also an $n\!-\!1$-step PRS\@.
\end{lemma}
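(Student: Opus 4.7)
My plan is to apply Lemma~\ref{lm:conv}, with $\set{R}^n$ itself as the convex symmetric set, by obtaining a distributional decomposition of $x(n)$ as the sum of a random variable distributed identically to $x(n-1)$ and an independent perturbation. Once such a decomposition is in place, the claim reduces to checking that the summand playing the role of $x(n-1)$ is monotone unimodal, so that the shift-lemma applies.

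Unrolling~\eqref{eq:sysLinError} from $x(0)=0$ gives $x(n)=\sum_{i=0}^{n-1} A_K^{n-1-i} w(i)$. Separating the summand indexed by $i=0$ yields
\[
x(n) \;=\; \underbrace{\sum_{i=1}^{n-1} A_K^{n-1-i} w(i)}_{=:\,Y} \;+\; A_K^{n-1} w(0),
\]
where $Y$ is independent of $w(0)$ by the independence of the disturbance sequence. Reindexing $j=i-1$ and exploiting that the $w(i)$ are i.i.d.\ shows that $Y$ has the same distribution as $\sum_{j=0}^{n-2} A_K^{n-2-j} w(j) = x(n-1)$.

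Next, I would establish monotone unimodality of $Y$. Since $\dist$ is central convex unimodal, closure under linear transformations makes each summand $A_K^{n-1-i} w(i)$ central convex unimodal, and iterated closure under convolution carries this to $Y$. Theorem~\ref{thm:conv-monotone} then gives that $Y$ is monotone unimodal. Applying Lemma~\ref{lm:conv} with $Y$ in the role of the monotone unimodal variable and $A_K^{n-1} w(0)$ as the independent perturbation, together with the convex symmetric set $\set{R}^n$, yields
\[
\Pr(x(n-1) \in \set{R}^n) \;=\; \Pr(Y \in \set{R}^n) \;\geq\; \Pr(Y + A_K^{n-1} w(0) \in \set{R}^n) \;=\; \Pr(x(n) \in \set{R}^n) \;\geq\; p,
\]
which is exactly the $(n-1)$-step PRS property.

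I expect the main obstacle to be the bookkeeping around the distributional identity rather than any deep argument: one must verify that the chosen reindexing genuinely preserves the joint law of the summands (so that $Y$ is distributed as $x(n-1)$ and not merely as some affine image of it), and apply Lemma~\ref{lm:conv} in the correct direction so as to compare probabilities at $x(n-1)$ rather than at a transformation. All the ingredients needed for this---closure of central convex unimodality under linear transformations and convolution, and the implication to monotone unimodality via Theorem~\ref{thm:conv-monotone}---are already available from the Preliminaries section.
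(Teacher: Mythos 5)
Your proof is correct and follows essentially the same route as the paper: the identical decomposition $x(n)=\sum_{i=1}^{n-1}A_K^{n-1-i}w(i)+A_K^{n-1}w(0)$, the observation that the first summand is distributed as $x(n\!-\!1)$ and is monotone unimodal by closure of central convex unimodality under linear maps and convolution together with Theorem~\ref{thm:conv-monotone}, and then Lemma~\ref{lm:conv} applied to $\set{R}^n$. Your explicit identification of the \emph{retained} summand (rather than $x(n)$ itself) as the variable that must be monotone unimodal is exactly the right reading of Lemma~\ref{lm:conv}, and matches what the paper does.
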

\begin{proof}
Since central convex unimodal distributions are closed
under linear transformation and convolution, we have with $x(0) = 0$ that
\begin{equation}\label{eq:nestedness}
   x(n) = \sum_{i=0}^{n-1} A_K^{n-i-1} w(i) = A_K^{n-1}w(0)
          + \sum_{i=1}^{n-1} A_K^{n-i-1} w(i)
\end{equation}
is central convex unimodal and, by Theorem~\ref{thm:conv-monotone},
monotone unimodal for all $n\geq1$.
We similarly have
\[ x(n\!-\!1) = \sum_{i=0}^{n-2} A_K^{n-i-2} w(i) = \sum_{i=1}^{n-1} A_K^{n-i-1} w(i\!-\!1) \, .\]
Since $x(n\!-\!1)$ has the same distribution as the last term
in~\eqref{eq:nestedness}, we can use Lemma~\ref{lm:conv} and get
\[
  \pushQED{\qed}
  \Pr(x(n) \in \set{R}^n) \leq \Pr(x(n\!-\!1) \in \set{R}^n) \, . \qedhere
\]
\end{proof}
\begin{remark}
Under the assumption of central convex unimodality, $\set{R}$ can thus be
directly obtained without taking iterations via $i$-step PRS in~\eqref{eq:Runion},
i.e. $\set{R} = \lim_{n\rightarrow \infty} \set{R}^n$,
and can be approximated using Markov chain Monte Carlo methods.
\end{remark}
%
\subsection{Variance-based PRS Construction}
\label{subsec:variancePRS}
A popular way to construct a PRS is by tracking mean and variance of
$x(k)$ in~\eqref{eq:sysLinError}, which are given by
\begin{align*}
  \E(x(k\!+\!1)) &= A_K \E(x(k)) + \E(w(k)) \, ,\\
  \var(x(k\!+\!1)) &= A_K\var(x(k))A_K^\tp + \var(w(k)) \, .
\end{align*}
Applying the Chebyshev bound provides that
\begin{equation}\label{eq:chebychevPRS}
  \set{R}_c^n := \setst{x}{(x - \E{(x(n))}^\tp \var{(x(n))}^{-1} (x - \E(x(n)) \leq \tilde{p} }
\end{equation}
is an $n$-step PRS of probability level $p = 1- n_x/\tilde{p}$.

Assuming that the disturbance distribution has zero mean,
these sets similarly satisfy the nestedness property of Lemma~\ref{lm:nestedness}.
\begin{lemma}[Chebyshev Reachable Set]\label{lm:chebychevReach}
Let $\E(w(k)) = 0$. The set $\set{R}_c^n$ in~\eqref{eq:chebychevPRS}
is an $i$-step PRS of probability level $\nobreak{p= 1- n_x/\tilde{p}}$
for all $0\leq i \leq n$.

In particular,
$\nobreak{\set{R}_c := \setst{e}{e^\tp \Sigma^{-1}_\infty e \leq \tilde{p}}}$,
where $\Sigma_\infty$ solves the Lyapunov equation
$A_K \Sigma_\infty A_K^\tp - \Sigma_\infty = -\var(w(k))$
is an $i$-step PRS of level $\nobreak{p= 1- n_x/\tilde{p}}$ for any $i \geq 0$.
\end{lemma}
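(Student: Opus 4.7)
The plan is to reduce both statements to the standard multivariate Chebyshev inequality together with the monotonicity of the covariance recursion.

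First I would record two basic observations. Because $x(0)=0$ and $\E(w(k))=0$, the mean recursion gives $\E(x(i))=0$ for all $i\geq 0$, so $\set{R}_c^n = \{x : x^\tp \Sigma_n^{-1} x \leq \tilde{p}\}$ with $\Sigma_n := \var(x(n))$. Applied to the zero-mean random vector $x(i)$ with covariance $\Sigma_i$, the multivariate Chebyshev (Markov) inequality yields
\[
\Pr\!\bigl(x(i)^\tp \Sigma_i^{-1} x(i) \leq \tilde{p}\bigr) \geq 1 - \tfrac{n_x}{\tilde{p}} = p,
\]
since $\E(x(i)^\tp \Sigma_i^{-1} x(i))=n_x$. (If $\Sigma_i$ is singular one restricts to its range; the degenerate case $i=0$ is trivial as $x(0)=0\in\set{R}_c^n$.)

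The second step is a monotonicity argument for the covariance sequence. From $\Sigma_{i+1} = A_K \Sigma_i A_K^\tp + \var(w)$ and $\Sigma_0 = 0$, induction together with $\var(w)\succeq 0$ and the fact that $M\mapsto A_K M A_K^\tp$ preserves the semidefinite order gives $\Sigma_i \preceq \Sigma_n$ whenever $0\leq i \leq n$. Hence $\Sigma_n^{-1} \preceq \Sigma_i^{-1}$ (on the appropriate subspace), so $x^\tp \Sigma_n^{-1} x \leq x^\tp \Sigma_i^{-1} x$ for every $x$, which means
\[
\{x : x^\tp \Sigma_i^{-1} x \leq \tilde{p}\} \subseteq \set{R}_c^n.
\]
Combining this inclusion with the Chebyshev bound above yields $\Pr(x(i)\in \set{R}_c^n)\geq p$ for every $0\leq i\leq n$, proving the first claim.

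For the second claim, stability of $A_K$ ensures that the Lyapunov equation has a unique solution $\Sigma_\infty \succeq 0$ and that $\Sigma_n \to \Sigma_\infty$ monotonically from below, so $\Sigma_n \preceq \Sigma_\infty$ for every $n\geq 0$. The same argument as above then gives $\set{R}_c^n \subseteq \set{R}_c$ for all $n$, and hence $\Pr(x(i)\in\set{R}_c)\geq \Pr(x(i)\in\set{R}_c^i)\geq p$ for any $i\geq 0$.

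The only delicate point I anticipate is a clean handling of possibly singular $\Sigma_i$ (in particular $i=0$) when writing $\Sigma_i^{-1}$; this is easily dealt with by interpreting the quadratic form via the pseudoinverse on $\operatorname{range}(\Sigma_i)$, but otherwise the argument is a direct combination of Chebyshev and the semidefinite monotonicity of the discrete Lyapunov recursion.
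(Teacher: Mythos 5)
Your proposal is correct and follows essentially the same route as the paper, which only sketches the argument as ``Chebyshev inequality plus nestedness of the sets $\set{R}_c^n \subseteq \set{R}_c^{n+1}$''; you have simply made explicit the two ingredients the authors leave implicit (the trace identity $\E(x(i)^\tp\Sigma_i^{-1}x(i))=n_x$ behind the Chebyshev bound, and the semidefinite monotonicity $\Sigma_i\preceq\Sigma_n\preceq\Sigma_\infty$ behind the set inclusions). Your caveat about singular $\Sigma_i$ is a reasonable refinement the paper does not address.
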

\begin{proof}
  The claim follows from straightforward application of the multivariate
  Chebyshev inequality and the fact that the sets are nested, i.e.\
  $\set{R}_c^n \subseteq \set{R}_c^{n+1}$, see~\cite{Hewing2018} for related results.
\end{proof}
\begin{remark}\label{rm:GaussDist}
  If $w(k)$ is normally distributed, $\set{R}_c^n$ with
  $\tilde{p} = \chi^2_{n_x}(p)$, is an $n$-step PRS of probability level $p$,
  where $\chi^2_{n_x}(p)$ is the quantile function of the chi-squared
  distribution with $n_x$ degrees of freedom.
\end{remark}
\section{Stochastic MPC using Probabilistic Reachable Sets}
\label{sec:MPC}
%
In the following, we present a stochastic MPC approach for LTI systems making
use of the concept of probabilistic reachable sets for constraint tightening.
We split the system state $x(k)$ into a nominal and error part
\[
x(k) = z(k) + e(k)
\]
with the intent to design a nominal MPC controller for $z(k)$.
Similar to robust tube-based MPC~\cite{Rawlings2009}, we keep the error $e(k)$
in a neighborhood of the nominal trajectory by using an
auxiliary state feedback controller $K$,
such that the input to system~\eqref{eq:system} is given by
\begin{equation}\label{eq:appliedInput}
  u(k) = v(k) + Ke(k) \, ,
\end{equation}
where $v(k)$ is the nominal input from the MPC for $z(k)$.
The chance constraints on uncertain states and inputs in \eqref{eq:chanceConstraints}
are then reformulated w.r.t. PRS on the error,
implementing conditions of the form $\Pr(e(k) \in \set{R}) \geq p \ \forall k$.

The proposed control scheme is characterized by the central idea
that $z(k) = x(k)$ should be selected whenever possible to introduce feedback
on $z(k)$ from measurements and react to unmodeled disturbances.
Due to the possible unboundedness of the disturbance $w(k)$, this can, however,
lead to infeasibility of the optimization problem,
in which case $z(k)$ is chosen by a backup strategy.
Similar concepts have been proposed in~\cite{Farina2013, Farina2015},
where the choice of $z(k) = x(k)$ is subject to additional conditions
related to a Lyapunov decrease in order to guarantee stability,
or~\cite{Cannon2009}, where application of the backup strategy is based on
the containment in a probabilistic invariant set based on a linear control law.
In contrast, we update the nominal system state to
$z(k) = x(k)$ whenever feasible, increasing the effect of feedback
on the nominal state, while still allowing for an asymptotic cost bound.
%
\subsection{Prediction Dynamics}
\label{subsec:Regulator}
%
The proposed stochastic MPC approach relies on predictions over
a finite time horizon using linear dynamics.
These predictions do not coincide with the closed-loop trajectory
of system~\eqref{eq:system} but have the same open-loop dynamics, i.e.
\[x_{i+1} = Ax_i + Bu_i + w_i \]
where $w_i$ is also i.i.d.\ with $w_i \sim \dist^w$.
By similarly decoupling the nominal state and error, $x_{i} = z_{i} + e_{i}$
and considering $u_i = v_i + Ke_i$, the prediction dynamics become
\begin{subequations}
\begin{align}
z_{i+1} &= A z_{i} + B v_{i} \, , \\
e_{i+1} &= (A+BK) e_{i} + w_{i} \, , \label{eq:errorSystem}
\end{align}
\end{subequations}
where the nominal predicted system state $z_i$ is deterministic,
while the predicted error $e_i$ is a random variable.

We use the predictions of the nominal system state $z_i$
to define a nominal MPC problem,
while the predicted error $e_i$ is essential for constraint tightening and
analysis of chance constraint satisfaction
(Section~\ref{subsec:constraintTightening}).
%
\subsection{Stochastic MPC Formulation \& Conditional Update}
\label{subsec:NominalMPC}
The stochastic MPC controller can be formulated using a
deterministic MPC optimization problem for the nominal system
\begin{mini!}[4]
	{Z,V}{\Vert z_N \Vert_{Q_f}^2 + \sum_{i=0}^{N-1} \Vert z_i \Vert_Q^2 + \Vert v_i \Vert_R^2}
	{\label{eq:MPC}}{}
  \addConstraint{z_{i+1}}{= A z_i + B v_i}
  \addConstraint{z_{N}}{\in \set{Z}_f}
  \addConstraint{z_{i}}{\in \set{Z}}
  \addConstraint{v_{i}}{\in \set{V}}
	\addConstraint{z_0}{= z(k)}
\end{mini!}
for all $i \in \{1,\ldots,N\!-\!1\}$ with state and input sequence
$Z = \{z_0, \ldots, z_N\}$, $V = \{v_0,\ldots,v_N\}$,
 a quadratic cost function with $Q_f,Q, R \succ 0$,
 as well as suitably tightened constraints
 $\set{Z} \subseteq \set{X}$, $\set{V} \subseteq \set{U}$,
 which will be detailed in Section~\ref{subsec:constraintTightening}.
We consider a terminal set $\set{Z}_f \subseteq \set{Z}$,
which is subject to the usual requirements, i.e.\
it is a positive invariant set under the local control law $v_i = Kz_i$,
which satisfies the input constraints
$Kz_i \in \set{V} \ \forall z_i \in \set{Z}_f$
and yields the cost decrease
\begin{equation} \label{eq:nominal_decrease}
    \Vert Az_i + BKz_i \Vert_{Q_f}^2 - \Vert z_{i} \Vert_{Q_f}^2
    \leq -\Vert z_i \Vert_Q^2 -\Vert v_i \Vert_R^2 \, \forall z_i \in \set{Z}_f \, .
\end{equation}
The nominal input applied in~\eqref{eq:appliedInput}
is $v(k) = v_0^*(z(k))$, i.e.\ the first element of the optimal input sequence
obtained from~\eqref{eq:MPC}.
\begin{assumption}[Initial Feasibility]
  We assume that optimization problem~\eqref{eq:MPC} is feasible for $z(0) = x(0)$.
\end{assumption}
Different from the system state $x(k)$, the nominal system state $z(k)$
can be selected, resulting in a corresponding error $e(k)$.
Due to disturbances that might drive $x(k)$ outside of
the feasible region, the choice of $z(k) = x(k)$ is not generally possible.
An obvious alternative is to set $z(k)$ to the first nominally predicted value
from the previous time step, which we denote $z_1\!(k\!-\!1)$.
While this enables straightforward analysis of stability,
recursive feasibility and chance constraint satisfaction,
this choice is generally not desirable,
since $z(k)$ would not be influenced by the measured states $x(k)$,
hence there would no feedback on $z(k)$~\cite{Rawlings2009}.
We therefore set $z(k) = x(k)$ whenever it is feasible in
optimization problem~\eqref{eq:MPC}, which we call Mode 1 ($M^1$).
Otherwise, we choose Mode 2 ($M^2$), the backup strategy,
which sets $z(k) = z_1\!(k\!-\!1)$ and is guaranteed to be feasible.
This results in the conditional update rule
\begin{align} \label{eq:condUpdate}
  z(k) := &\begin{cases}
            x(k) &, \text{ if feasible in~\eqref{eq:MPC} } (M^1) \\
            z_1\!(k\!-\!1) &, \text{ otherwise } (M^2) \, .
          \end{cases}
\end{align}
Note that the resulting controller is not a state-feedback controller,
since it is not a function of only $x(k)$,
but rather a feedback controller in an extended state
$\nobreak{u(k) = \kappa(x(k),z_1\!(k\!-\!1))}$.
\begin{remark}
  An alternative backup strategy,
  avoiding the solution of~\eqref{eq:MPC} in Mode 2,
  is to apply the shifted solution of~\eqref{eq:MPC} from the previous time step
  $v(k) = \nobreak{v^*_1(k\!-\!1)}$, since
  $\bar{V} = \{v^*_1\!(k\!-\!1), \ldots, v^*_{N-1}\!(k\!-\!1), Kz^*_N\!(k\!-\!1)\}$
  corresponds to a feasible suboptimal solution at time step $k$.
  The results on constraint satisfaction and the average asymptotic cost
  in the following sections remain unchanged.
  We select the receding horizon optimization of the nominal trajectory
  also in Mode~2 for notational convenience
  and the fact that it is expected to improve closed-loop performance.
\end{remark}
%
\subsection{Constraint-tightening for Chance Constraint Satisfaction}
\label{subsec:constraintTightening}
%
We make use of PRS for the predicted error system~\eqref{eq:errorSystem}
according to Definition~\ref{def:PRS} in order to tighten the constraints
such that chance constraints on $x$ and $u$ are satisfied
via the deterministic constraints on $z$ and $v$.
We allow for different tightening levels of state and input constraints
to acount for the case that different probability levels are selected,
e.g.\ input constraints are often required to be fulfilled with probability~1.

This results in two PRS $\set{R}_x$ and $\set{R}_u$ for the predicted error
system \eqref{eq:errorSystem} of probability level $p_x$ and $p_u$, respectively,
with which the constraint tightening is defined as
\begin{subequations}\label{eq:tightening}
\begin{align}
  z_{i} \in \set{Z} &:= \set{X} \ominus \set{R}_x \, ,\\
  v_{i} \in \set{V} &:= \set{U} \ominus K\set{R}_u \, .
\end{align}
\end{subequations}
\begin{remark}\label{rm:individualConstraints}
Treatment of different individual constraints,
as opposed to joint constraints, can be analogously achieved by introducing a
PRS for each constraint separately.
\end{remark}
Note that neither constraint sets nor the PRS are required to be bounded,
it is therefore possible to use probabilistic reachable sets for tightening
that are unbounded in a direction that is unconstrained,
e.g.\ for the tightening of half-space constraints~\cite{Blanchini2015}.
It is generally desirable to design the PRS for tightening such that
the Pontryagin difference in~\eqref{eq:tightening} remains as big as possible.
This can be achieved by considering tight PRS,
e.g.\ in the sense of Gaussian distributions using
Lemma~\ref{lm:chebychevReach} with Remark~\ref{rm:GaussDist}, 
and choosing the sets for tightening
such that they are aligned with the constraint sets,
e.g.\ tightening of a half-space constraint by a parallel half-space PRS
based on the corresponding marginal distribution.
\begin{remark}
A less conservative tightening is possible using time-varying confidence bounds,
i.e.\ probabilistic n-step reachable sets $\set{R}^n$, while the infinite
time reachable set $\set{R}$ is used only for the terminal set $\set{Z}_f$.
For simplicity we consider the case of constant tightening by $\set{R}$.
\end{remark}
The use of a conditional update scheme~\eqref{eq:condUpdate} complicates
analysis of chance constraint satisfaction~\eqref{eq:chanceConstraints},
since the closed-loop error $e(k)$ does not follow~\eqref{eq:errorSystem}
and evolves nonlinearly.
A tightening of the constraints under the assumption of linear error propagation
in the prediction therefore does not necessarily guarantee satisfaction
of the chance constraints~\eqref{eq:chanceConstraints} in closed-loop
when used with a conditional update scheme such as~\eqref{eq:condUpdate}.

In the following, we make use of $\set{R}$ to refer to properties
relating to both $\set{R}_x$ and $\set{R}_u$ to simplify notation.
%
\subsubsection{Chance Constraint Satisfaction in Prediction}
%
As already noted in~\cite{Farina2016}, constraint tightening based on the
predicted error guarantees chance constraint satisfaction
of the predicted states, given that the optimization
problem~\eqref{eq:MPC} is feasible at $z(k) = x(k)$, i.e.\ whenever $M^1$.
From the definition of a probabilistic reachable set $\set{R}$
we have for the predicted error
\[
\Pr(e_i \in \set{R}) \geq p \ \forall i\geq 0\, ,
\]
when $e_0 = e(k) = 0$, i.e.\ in $M^1$.
Under no further assumptions on the disturbance distribution or set $\set{R}$
we can therefore only state the probabilistic guarantees:
\begin{subequations}\label{eq:pChanceConstraints}
  \begin{align}
    \Pr(x_i \in  \set{X} \, | \, M^1) \geq p_x \ \forall i\geq0 \, ,\\
    \Pr(u_i \in  \set{U} \, | \, M^1) \geq p_u \ \forall i\geq0 \, ,
  \end{align}
\end{subequations}
which are directly obtained from
$\nobreak{\Pr(e_i \in \set{R}_x)} \geq p_x$, since $z_i \in \set{Z} = \set{X} \ominus \set{R}_x$
and $\nobreak{\Pr(e_i \in \set{R}_u)} \geq p_u$, since $v_i \in \set{V} = \set{U} \ominus K\set{R}_u$
%
\subsubsection{Closed-loop Chance Constraint Satisfaction}
\label{subsubsec:CLchanceConstraints}
Satisfaction of the chance constraints~\eqref{eq:chanceConstraints}
for the closed-loop system requires that
\[
\Pr(e(k) \in \set{R}) \geq p \ \forall k\geq0 \, ,
\]
given that $e(0) = 0$,
that is the fulfillment of the constraints for the closed-loop error $e(k)$,
which has not been addressed in previous work~\cite{Farina2013, Farina2015, Paulson2017}.

Under the assumption that $\dist^w$ is central convex unimodal
and the PRS convex symmetric, the following Theorem establishes
that $\set{R}$ is a PRS for the closed-loop error $e(k)$
which implies chance constraint satisfaction for the closed-loop system.
\begin{theorem}[PRS for Closed-Loop Error]\label{lm:constraintSatisfaction}
Let $\dist^w$ be central convex unimodal and let $\set{R}$ be a convex symmetric set.
For system~\eqref{eq:system} under the control law~\eqref{eq:appliedInput}
resulting from~\eqref{eq:MPC} with tightening~\eqref{eq:tightening},
and the conditional update rule~\eqref{eq:condUpdate} we have
\[ \Pr(e(k) \in \set{R}) \geq \Pr(e_k \in \set{R})\, , \]
for all $k\geq 0$, conditioned on $e(0) = e_0 = 0$.
\end{theorem}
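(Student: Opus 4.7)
The plan is to argue by induction on $k$, decomposing the closed-loop error at each step according to which mode is triggered, and exploiting that the convex symmetric set $\set{R}$ contains the origin. The base case $k=0$ is immediate since $e(0) = e_0 = 0 \in \set{R}$.

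For the induction step, let $\mathcal{F}_k$ collect the history through time $k$. Given $\mathcal{F}_k$, the Mode~1 event at time $k+1$ reduces to $w(k) \in \mathcal{W}_k$, where $\mathcal{W}_k$ is the history-dependent set of disturbances for which $A x(k) + B u(k) + w$ is feasible in~\eqref{eq:MPC}; on this event $e(k+1) = 0 \in \set{R}$, and on its complement $e(k+1) = A_K e(k) + w(k)$ coincides with the open-loop recursion. Bounding below the indicator of $\{A_K e(k) + w(k) \in \set{R}\} \cap \{w(k) \in \mathcal{W}_k\}$ by the indicator of $\{w(k) \in \mathcal{W}_k\}$ gives
\begin{equation*}
  \Pr(e(k+1) \in \set{R} \mid \mathcal{F}_k) \ \geq\ \Pr(A_K e(k) + w(k) \in \set{R} \mid \mathcal{F}_k),
\end{equation*}
and taking expectations yields $\Pr(e(k+1) \in \set{R}) \geq \Pr(A_K e(k) + w(k) \in \set{R})$.

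It then remains to compare $\Pr(A_K e(k) + w(k) \in \set{R})$ with $\Pr(A_K e_k + w_k \in \set{R}) = \Pr(e_{k+1} \in \set{R})$, which I expect to be the main obstacle. A direct use of the inductive hypothesis fails, because it controls only the single probability $\Pr(e(k) \in \set{R})$, whereas a conditioning argument on $w(k) = \omega$ effectively requires a bound on every translated convex set $\set{R} - \omega$. I would therefore strengthen the inductive invariant to a Birnbaum-type peakedness ordering, $\Pr(e(k) \in \set{S}) \geq \Pr(e_k \in \set{S})$ for \emph{every} convex symmetric $\set{S}$, together with central convex unimodality of the law of $e(k)$. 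The latter propagates because central convex unimodal distributions are closed under linear transformation, convolution, and mixtures, which cover both the Dirac mass at the origin contributed by Mode~1 resets and the post-reset partial sums of disturbances. The peakedness ordering then propagates by an Anderson-type argument via Lemma~\ref{lm:conv}, applied to $A_K e(k) + w(k)$ versus $A_K e_k + w_k$, and specializing to the given $\set{R}$ closes the induction.

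The main technical difficulty is verifying that the closed-loop mixture law of $e(k)$ genuinely stays in the central convex unimodal class and that the peakedness ordering really does survive the history-dependent switching: conditioning on repeated Mode~2 events between resets distorts the conditional laws of the disturbances driving any post-reset segment, so the mixture components are not simply distributed as $e_{k-\tau(k)}$. I anticipate resolving this by indexing the mixture on the last reset time $\tau(k)$, using the convex-hull characterization of central convex unimodality to place each mixture component in the class, and invoking the nestedness property of Lemma~\ref{lm:nestedness} to conclude that shorter post-reset partial sums dominate the full open-loop sum $e_k$ in the peakedness order.
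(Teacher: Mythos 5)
Your first reduction is sound: the pointwise domination of $\mathbf{1}[e(k\!+\!1)\in\set{R}]$ by $\mathbf{1}[A_Ke(k)+w(k)\in\set{R}]$ (equality on Mode~2, trivial on Mode~1 since $0\in\set{R}$) correctly yields $\Pr(e(k\!+\!1)\in\set{R})\geq\Pr(A_Ke(k)+w(k)\in\set{R})$, and you correctly diagnose that the remaining comparison with $\Pr(A_Ke_k+w_k\in\set{R})$ is where the work lies. The gap is in your proposed fix. A Sherman/Birnbaum-type convolution step that upgrades the peakedness ordering $e(k)\succeq e_k$ to $A_Ke(k)+w(k)\succeq A_Ke_k+w_k$ requires the law of $e(k)$ itself to be central convex unimodal, and this invariant does not propagate. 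Conditioned on the history, the Mode~2 event at time $t$ is $\{w(t\!-\!1)\notin\set{F}_t\}$ for a \emph{convex} feasibility set $\set{F}_t$, i.e.\ a truncation of $\dist^w$ to the complement of a convex set; such truncations are in general multimodal (truncate a scalar Gaussian to $|x|>1$), so the conditional laws of the disturbances driving any post-reset segment leave the central convex unimodal class. Indexing the mixture by the last reset time does not help: those mixture components are exactly these conditioned partial sums, not copies of $e_{k-\tau}$, so neither the convex-hull characterization nor the nestedness of Lemma~\ref{lm:nestedness} applies to them. The strengthened invariant you want to carry forward is in fact true (it is the theorem applied to arbitrary convex symmetric sets), but your induction step needs strictly more than the invariant supplies, so the argument does not close.

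The paper escapes this by running the telescoping in the other direction. It interpolates via the quantities $\Pr(A_K^{n}e(k\!-\!n)+\tilde{e}_{n}\in\set{R})$ with $\tilde{e}_{n}=\sum_{i=0}^{n-1}A_K^{n-i-1}w_i$ built from the \emph{prediction} disturbances, and at each step applies Lemma~\ref{lm:conv} with $\tilde{e}_{n}$ in the role of the monotone unimodal variable and the entire closed-loop contribution $A_K^{n+1}e(k\!-\!n\!-\!1)+A_K^{n}w(k\!-\!n\!-\!1)$ in the role of the arbitrary independent variable $x$, which Lemma~\ref{lm:conv} places no assumption on. Unimodality is therefore only ever invoked for an unconditioned sum of i.i.d.\ prediction noises, and no distributional regularity of the closed-loop error is needed. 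If you want to keep a forward-style argument, you would need to restructure it so that the unimodal summand is always the fresh, unconditioned noise — which essentially reproduces the paper's proof.
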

\begin{proof}
Let $e_i(k)$ be the error predicted $i$ steps ahead at time $k$
using the linear dynamics~\eqref{eq:errorSystem}, with $e_0(k) = e(k)$ for all
$k$. The error $e_i(k)$ therefore depends on random variables
$w(0),\ldots,w(k\!-\!1)$ through the closed-loop dynamics, as well as
$w_0(k),\ldots,w_{i-1}(k)$ through the prediction dynamics.
We prove the claim by showing that
\[
\Pr( e_{n}(k-n) \in \set{R}) \geq \Pr( e_{n+1}(k-n-1) \in \set{R})
\]
for $n = 0,\ldots,k\!-\!1$,
from which
$\Pr(e_{0}(k) \in \set{R}) \geq \Pr( e_{k}(0) \in \set{R})$
follows immediately.
We denote with $M_k^1$ and $M_k^2$ if Mode 1 or 2 was active in time step $k$
and use $A_K = A+BK$.
With $\nobreak{\tilde{e}_{n} = \sum_{i=0}^{n-1} A_K^{n-i-1} w_{i}}$
and $\tilde{e}_0 = 0$ we have
\[
  \Pr( e_{n}(k\!-\!n) \in \set{R})  = \Pr (A_K^{n} e(k\!-\!n) + \tilde{e}_{n} \in \set{R}) \, .
\]
Note that the closed-loop error $e(k)$ is equal to $0$ whenever $M^1_k$
and equal to $A_K e(k\!-\!1) + w(k-1)$,
conditioned on the fact that it leads to infeasibility, whenever $M^2_k$.
Splitting the probability based on the active mode therefore gives
\begin{align}
  &\Pr( e_{n}(k\!-\!n) \in \set{R}) \nonumber \\
   =&\Pr (A_K^{n} e(k\!-\!n) + \tilde{e}_{n} \in \set{R} \, | \, M_{k-n}^2) \Pr(M_{k-n}^2) \nonumber \\
  &+ \Pr (A_K^{n} e(k\!-\!n) + \tilde{e}_{n} \in \set{R} \, | \, M_{k-n}^1) \Pr(M_{k-n}^1) \nonumber \\
  = &\Pr (A_K^{n+1} e(k\!-\!n\!-\!1) + A_K^n w(k\!-\!n\!-\!1) \nonumber \\
  & \qquad \qquad \qquad + \tilde{e}_{n} \in \set{R} \, | \, M_{k-n}^2) \Pr(M_{k-n}^2) \nonumber \\
  &+ \Pr (\tilde{e}_{n} \in \set{R} \, | \, M_{k-n}^1) \Pr(M_{k-n}^1) \, . \label{eq:CL_PRS_proof1}
\end{align}
Since $\tilde{e}_{n}$ is independent of the other random variables
and convex unimodal, Lemma~\ref{lm:conv} allows for bounding
\begin{align*}
&\Pr (\tilde{e}_{n} \in \set{R} \, | \, M_{k-n}^1) \\ \geq &\Pr(A_K^{n+1} e(k\!-\!n\!-\!1) + A_K^n w(k\!-\!n\!-\!1) + \tilde{e}_{n} \in \set{R} \, | \, M_{k-n}^1)
\end{align*}
which, substituted in~\eqref{eq:CL_PRS_proof1}, yields
\begin{align*}
  \eqref{eq:CL_PRS_proof1} \geq &\Pr (A_K^{n+1} e(k\!-\!n\!-\!1) + A_K^n w(k\!-\!n\!-\!1) + \tilde{e}_{n}  \in \set{R}) \\
  = &\Pr (A_K^{n+1} e(k\!-\!n\!-\!1) + \tilde{e}_{n+1}  \in \set{R}) \\
  = &\Pr (e_{n+1}(k\!-\!n\!-\!1) \in \set{R}) \, ,
 \end{align*}
since $A_K^n w(k\!-\!n\!-\!1) + \tilde{e}_{n}$ has the same distribution as $\tilde{e}_{n+1}$.
\end{proof}

\begin{corollary}
 Theorem~\ref{lm:constraintSatisfaction} implies satisfaction
 of~\eqref{eq:chanceConstraints} for the closed-loop system.
\end{corollary}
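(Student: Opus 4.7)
The plan is to chain Theorem~\ref{lm:constraintSatisfaction} with the PRS property of $\set{R}_x$ and $\set{R}_u$ and the Pontryagin-difference tightening~\eqref{eq:tightening}. I would apply the theorem twice, once with $\set{R} = \set{R}_x$ and once with $\set{R} = \set{R}_u$, which by assumption are both convex symmetric, so the hypotheses are met.

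First I would argue for the state constraint. By Definition~\ref{def:PRS} applied to the predicted error system~\eqref{eq:errorSystem} with $e_0 = 0$, $\Pr(e_k \in \set{R}_x) \geq p_x$ for all $k \geq 0$. Theorem~\ref{lm:constraintSatisfaction} then gives $\Pr(e(k) \in \set{R}_x) \geq p_x$. Since the MPC problem is feasible at every time step (by the conditional update rule~\eqref{eq:condUpdate} together with the initial feasibility assumption), we have $z(k) \in \set{Z} = \set{X} \ominus \set{R}_x$, so by the definition of the Pontryagin difference, $z(k) + e(k) \in \set{X}$ whenever $e(k) \in \set{R}_x$. Monotonicity of probability therefore yields $\Pr(x(k) \in \set{X}) \geq \Pr(e(k) \in \set{R}_x) \geq p_x$.

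The input case is entirely analogous: Theorem~\ref{lm:constraintSatisfaction} with $\set{R}_u$ gives $\Pr(e(k) \in \set{R}_u) \geq p_u$, and $v(k) \in \set{V} = \set{U} \ominus K \set{R}_u$ means $v(k) + K e(k) \in \set{U}$ whenever $Ke(k) \in K\set{R}_u$, which holds whenever $e(k) \in \set{R}_u$. Combining this with the applied control law~\eqref{eq:appliedInput} yields $\Pr(u(k) \in \set{U}) \geq p_u$.

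There is no real obstacle here; the corollary is essentially a bookkeeping exercise that spells out how Theorem~\ref{lm:constraintSatisfaction} feeds into~\eqref{eq:chanceConstraints} through the tightening~\eqref{eq:tightening}. The only subtle point I would be careful about is ensuring that the feasibility of~\eqref{eq:MPC} (so that $z(k) \in \set{Z}$ and $v(k) \in \set{V}$) is maintained at every time step $k$, which is guaranteed by construction of the Mode~2 backup $z(k) = z_1(k\!-\!1)$ together with the initial feasibility assumption — a standard recursive feasibility argument inherited from the nominal tube-MPC machinery.
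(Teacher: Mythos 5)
Your proposal is correct and follows essentially the same route as the paper's proof: recursive feasibility (from initial feasibility and the Mode~2 backup) gives $z(k) \in \set{Z}$ and $v(k) \in \set{V}$ for all $k$, Theorem~\ref{lm:constraintSatisfaction} combined with the PRS property gives $\Pr(e(k) \in \set{R}_x) \geq p_x$ and $\Pr(Ke(k) \in K\set{R}_u) \geq p_u$, and the Pontryagin-difference definition then yields~\eqref{eq:chanceConstraints}. You merely spell out the monotonicity step and the double application of the theorem, which the paper leaves implicit.
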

\begin{proof}
  By initial feasibility, the conditional update scheme and
  optimization problem $\eqref{eq:MPC}$, we have that
  $z(k) \in \set{Z} = \set{X} \ominus \set{R}_x$ and
  $v(k) \in \set{V} = \set{U} \ominus K\set{R}_u$ for all $k \geq 0$.
  Since by Theorem~\ref{lm:constraintSatisfaction},
  $\Pr(e(k) \in \set{R}_x) \geq p_x$ and
  $\Pr(Ke(k) \in K\set{R}_u) \geq p_u$ the claim follows immediately.
\end{proof}
%
\subsection{Asymptotic Average Cost Bound}
\label{subsec:Stability}
%
In the following, we establish an asymptotic average cost bound for
the closed-loop system under the proposed stochastic MPC scheme
and conditional update rule, providing a notion of stability and convergence.
The bound is derived by using Lipschitz-type arguments on the optimal cost
of optimization problem~\eqref{eq:MPC}. For this we make use of the
following assumption.
\begin{assumption}\label{ass:boundedness}
  The set of feasible $z(k)$ in \eqref{eq:MPC} is bounded.
\end{assumption}
This assumption is usually valid, e.g. if the terminal and input constraint
sets are bounded.
Similar arguments have been previously used e.g.\ in~\cite{Lorenzen2017}.

It is well-known that the optimal cost $J^*(z)$ of a nominal MPC problem
with quadratic cost is piecewise quadratic in the state
$z$~\cite{Bemporad2002}. Together with
Assumption~\ref{ass:boundedness} this implies that there exists a constant $L$, such that
\begin{equation} \label{eq:lipschitz}
  J^*(z) + L \Vert e \Vert_2 \geq J^*(z + e) \, .
\end{equation}
\begin{theorem}[Cost Decrease]\label{lm:costDecrease} 
  Consider system~\eqref{eq:system} under the control law~\eqref{eq:appliedInput}
  resulting from~\eqref{eq:MPC} with tightening~\eqref{eq:tightening}
  and the conditional update rule~\eqref{eq:condUpdate}.
  Let $J^*(z(k))$ be the optimal cost of~\eqref{eq:MPC},
  $\nobreak{C = L/\sqrt{\lambda_{\min}(P)}}$,
  and $P$ a solution to the Lyapunov equation
  $\nobreak{{(A+BK)}^\tp P (A+BK) - P \preceq -\epsilon I}$
  for some $\epsilon > 0$.
  We have
  \begin{align*}
    &\E(J^*(z(k\!+\!1)) - J^*(z(k)) \nonumber  \\
    \leq &-\Vert z(k) \Vert_Q^2 -\Vert v(k) \Vert_R^2 - \epsilon C \Vert e(k) \Vert_P + C \E(\Vert w(k) \Vert_P) \, ,
  \end{align*}
  conditioned on $e(0) = 0$.
\end{theorem}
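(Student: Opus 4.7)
The plan is to combine the standard deterministic shifted-horizon candidate argument with the Lipschitz bound~\eqref{eq:lipschitz} on the optimal cost, and then take conditional expectation given the history at time $k$ to translate the propagated error into the claimed stochastic cost decrease.

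First, I would construct the usual shifted candidate solution. Let $V^* = \{v_0^*,\ldots,v_{N-1}^*\}$ and $Z^* = \{z_0^*,\ldots,z_N^*\}$ be the optimizers of~\eqref{eq:MPC} at time $k$. By positive invariance of $\set{Z}_f$ under $K$, admissibility of $Kz_N^*$ in $\set{V}$, and the terminal cost decrease~\eqref{eq:nominal_decrease}, the shifted sequence $\bar V = \{v_1^*,\ldots,v_{N-1}^*,Kz_N^*\}$ is feasible at $z_1^*$ and yields
\[ J^*(z_1^*) \leq J^*(z(k)) - \Vert z(k) \Vert_Q^2 - \Vert v(k) \Vert_R^2 \, . \]
Next, I split on the mode at $k\!+\!1$ from~\eqref{eq:condUpdate}. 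In Mode~$2$, $z(k\!+\!1) = z_1^*$, so $J^*(z(k\!+\!1)) = J^*(z_1^*)$. In Mode~$1$, $x(k\!+\!1)$ is feasible in~\eqref{eq:MPC}, and a short calculation with~\eqref{eq:system} and~\eqref{eq:appliedInput} gives $x(k\!+\!1) - z_1^* = (A\!+\!BK)e(k) + w(k)$, so the Lipschitz property~\eqref{eq:lipschitz} yields $J^*(z(k\!+\!1)) \leq J^*(z_1^*) + L\Vert (A\!+\!BK)e(k) + w(k)\Vert_2$. Both modes are subsumed by the latter inequality, and chaining with the shifted-candidate bound gives the pathwise estimate
\[ J^*(z(k\!+\!1)) - J^*(z(k)) \leq -\Vert z(k) \Vert_Q^2 - \Vert v(k) \Vert_R^2 + L\Vert (A\!+\!BK) e(k) + w(k) \Vert_2 \, . \]

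Now I would take expectation conditioned on the information up to time $k$ (so that $z(k), v(k), e(k)$ are measurable while $w(k)$ is random), apply $\Vert \cdot \Vert_2 \leq \Vert \cdot \Vert_P / \sqrt{\lambda_{\min}(P)}$, and use the triangle inequality in the $P$-norm. This bounds the noise term by $C \Vert (A\!+\!BK)e(k) \Vert_P + C \E(\Vert w(k) \Vert_P)$. Finally, the Lyapunov inequality $(A\!+\!BK)^\tp P (A\!+\!BK) \preceq P - \epsilon I$ is used to produce a contraction of the form $\Vert (A\!+\!BK)e(k) \Vert_P \leq (1\!-\!\epsilon)\Vert e(k) \Vert_P$, which substituted above gives exactly the claimed right-hand side.

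The main obstacle I anticipate is the passage from the quadratic Lyapunov bound $\Vert A_K e \Vert_P^2 \leq \Vert e \Vert_P^2 - \epsilon \Vert e \Vert_2^2$ to a linear $P$-norm contraction of the form $\Vert A_K e \Vert_P \leq (1-\epsilon) \Vert e \Vert_P$: this step requires lower-bounding $\Vert e \Vert_2$ in terms of $\Vert e \Vert_P$ using $\lambda_{\max}(P)$ and invoking concavity of the square root, so that the $\epsilon$ appearing in the contraction factor is implicitly a rescaled version of the $\epsilon$ in the Lyapunov inequality. The remainder of the argument is routine; in particular, the mode-dependent nature of the closed-loop error $e(k)$, which had to be handled carefully in Theorem~\ref{lm:constraintSatisfaction}, is benign here because the cost-decrease estimate above holds pathwise in both modes.
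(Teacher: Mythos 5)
Your proposal is correct and follows essentially the same route as the paper's own proof: shifted suboptimal candidate, mode split with the Mode~2 bound subsumed by adding the nonnegative Lipschitz term, conversion from the $2$-norm to the $P$-norm via $\lambda_{\min}(P)$, and the Lyapunov contraction $\Vert (A\!+\!BK)e\Vert_P \leq (1\!-\!\epsilon)\Vert e\Vert_P$. The ``obstacle'' you flag is real but works in your favor: the paper simply asserts the linear $P$-norm contraction ``from the choice of $P$,'' whereas strictly it only holds with $\epsilon$ replaced by a rescaled constant (roughly $\epsilon/(2\lambda_{\max}(P))$), so your more careful treatment of that step is, if anything, an improvement.
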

\begin{proof}
 See Appendix.
\end{proof}
Using the cost decrease in Theorem~\ref{lm:costDecrease} we can derive
an average asymptotic cost bound of the presented SMPC approach.
\begin{corollary}[Average Asymptotic Cost Bound]\label{cor:AsympConvergence} \hfill \\
Let $\nobreak{w\sim \dist^w}$. Theorem~\ref{lm:costDecrease} implies
\begin{align*}
 &\lim_{t \rightarrow \infty} \frac{1}{t} \sum_{k=0}^t
 \E \left( \Vert z(k) \Vert_Q^2 + \Vert u(k) \Vert_R^2
 + \epsilon C \Vert e(k) \Vert_P \right) \\ 
 & \leq C \E \left(\Vert w \Vert_P \right).
\end{align*}
\end{corollary}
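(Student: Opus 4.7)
The plan is to derive the average-cost bound by telescoping the one-step descent inequality of Theorem~\ref{lm:costDecrease} and passing to the limit using non-negativity of the optimal cost $J^*$.

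First, I would take total expectations on both sides of the inequality in Theorem~\ref{lm:costDecrease}, making every term deterministic, and then sum for $k = 0, \ldots, t-1$. The left-hand side collapses telescopically to $\E(J^*(z(t))) - J^*(z(0))$, where $J^*(z(0))$ is a fixed finite constant since $z(0) = x(0)$ is assumed known and initially feasible.

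Next, because $w(k)$ is i.i.d. with distribution $\dist^w$, we have $\E(\|w(k)\|_P) = \E(\|w\|_P)$ for every $k$, so the cumulative noise term on the right-hand side becomes $tC\,\E(\|w\|_P)$. Rearranging and discarding $\E(J^*(z(t))) \geq 0$ from the left-hand side yields
\[
\sum_{k=0}^{t-1} \E\!\left(\|z(k)\|_Q^2 + \|v(k)\|_R^2 + \epsilon C\,\|e(k)\|_P\right) \leq J^*(z(0)) + tC\,\E(\|w\|_P).
\]
Dividing by $t$ and letting $t\to\infty$ kills the initial-cost contribution $J^*(z(0))/t$, leaving exactly the claimed asymptotic bound.

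There is no deep obstacle; the only care required is (i) verifying that the $\E$ in Theorem~\ref{lm:costDecrease} is taken with respect to the full probabilistic history, so that iterated expectations combine as needed despite $z(k)$ being random through the conditional update rule~\eqref{eq:condUpdate}, and (ii) reconciling that Theorem~\ref{lm:costDecrease} produces $\|v(k)\|_R^2$ while the corollary is written with $\|u(k)\|_R^2$. The latter can be read as a mild notational slip, since the nominal MPC cost penalizes the nominal input $v$; alternatively one uses $u(k)=v(k)+Ke(k)$ together with the $\epsilon C\|e(k)\|_P$ already present in the bound to control the discrepancy.
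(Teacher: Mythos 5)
Your proof is correct and follows essentially the same route as the paper: telescoping the expected cost decrease of Theorem~\ref{lm:costDecrease}, using non-negativity of $J^*$ and the i.i.d.\ property of $w(k)$, and dividing by $t$. Your observation about the mismatch between $\Vert v(k)\Vert_R^2$ in the theorem and $\Vert u(k)\Vert_R^2$ in the corollary is apt --- the paper silently makes the same substitution in its own proof --- and your reading of it as a notational slip on the nominal input is the right one.
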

\begin{proof}
We use a typical argument in stochastic MPC~\cite{Cannon2009, Lorenzen2017}:
  \begin{align*}
  0 &\leq \lim_{t \rightarrow \infty} \frac{1}{t} \E \left(J^*(z(t)) - J^*(z(0))\right) \\
  &\leq \lim_{t \rightarrow \infty} \frac{1}{t} \E \bigg(\sum_{k=0}^t \! -\Vert z(k) \Vert_Q^2 \! - \Vert u(k) \Vert_R^2 - \epsilon C \Vert e(k) \Vert_P \\ 
  & \quad \quad \quad \quad \quad \quad\! + \! C \E(\Vert w(k) \Vert_P) \bigg).
\end{align*}
With $\lim_{t \rightarrow \infty} \frac{1}{t} \sum_{k=0}^t C \E(\Vert w(k) \Vert_P) = C
\E(\Vert w \Vert_P)$ the claim follows.
\end{proof}
\section{Numerical Examples}
\label{sec:examples}
%
We demonstrate our approach and highlight some of its features
on a simple double integrator system
\begin{align*}
x(k\!+\!1) &=
  \begin{bmatrix}
    1 & 1 \\ 0 & 1
  \end{bmatrix} x(k) +
  \begin{bmatrix}
    0.5 \\ 1
  \end{bmatrix}
  u(k) + w(k) \, ,
\end{align*}
where $w(k) \sim \mathcal{N}(0,\Sigma)$ is distributed
following a normal distribution with variance $\Sigma = \diag({[0.01, \,1]}^\tp)$.
We furthermore consider chance constraints on the absolute value
of the second state, i.e.\ the velocity,
denoted with ${[x(k)]}_2$ and 
input constraints:
\begin{subequations}
\begin{align}
	&\Pr({|[x(k)]}_2| \leq 1.2) \geq 0.6 \, ,\label{eq:ex_stateConstr}\\
	&\Pr(|u(k)| \leq 6) \geq 0.9 \, .
\end{align}
\end{subequations}

%
\subsection{MPC Setup}
%
We choose state and input stage costs with
$Q = \nobreak{\diag({[0.1, \,1]}^\tp)}$, $R = 0.1$
and design the feedback controller $K$ as an LQR controller
based on the same weights.
The prediction horizon is set to $N=30$ and for simplicity the terminal set
is chosen as $\set{Z}_f = \{{[0, \, 0]}^\tp\} $.
\subsection{Reachable Set Computation}
%
Since the distribution of $w(k)$ is Gaussian, we can compute PRS $\set{R}_x$,
$\set{R}_u$ of level $p_x$ and $p_u$ based on the marginal distribution
of ${[\tilde{e}_i]}_2$ and $Ke_i$ as proposed in Lemma~\ref{lm:chebychevReach}
with Remark~\ref{rm:GaussDist}.

The resulting sets for tightening are
\begin{subequations}\label{eq:ex_PRS}
\begin{align}
  \set{R}_x &= \setst{e}{|{[e]}_2| \leq 0.95} \, , \\
  K\set{R}_u &= \setst{Ke}{ |Ke| \leq 3.2 } \, .
\end{align}
\end{subequations}
\subsection{Results}
We compare our approach, which we call SMPC-prs, to previous results presented
in~\cite{Farina2013, Farina2015} using the same fixed controller gain $K$.
The approach is conceptually similar to the one presented in this paper
and will be referred to as SMPC-c.
The main differences as relevant to the comparison are that in SMPC-c
\begin{itemize}
  \item the selection of Mode 1 and Mode 2 is based on feasibility
  and the requirement of achieving a lower cost w.r.t.\ a Lyapunov function.
  \item the constraint tightening is specified for individual half-space violations.
  \item the constraint tightening changes over the horizon based on
  the predicted variances of the error.
\end{itemize}
Constraint satisfaction in SMPC-c is provided for
the predicted errors~\cite{Farina2013,Farina2015,Farina2016}.

Since in SMPC-c chance constraints are defined on individual half-spaces,
we consider an individual tightening of the box constraints based on $p_x/2$,
such that using the union bound we enforce~\eqref{eq:ex_stateConstr}.
%
\subsubsection{Closed-loop Constraint Satisfaction}
We first illustrate the importance of Theorem~\ref{lm:constraintSatisfaction}
by showing that closed-loop constraint satisfaction~\eqref{eq:chanceConstraints}
can differ significantly from constraint satisfaction
in prediction~\eqref{eq:pChanceConstraints}.
For this purpose, we investigate the probability of violating one individual
half-space constraint, for which SMPC-c guarantees a minimum
satisfaction probability in prediction of
\[ \Pr ({[x_i]}_2 \geq -1.2 \, | \, M_1) \geq 80\% .\]
Simulating the system 500 times from initial state $x(0) = {[6, \, 0]}^\tp$
with different disturbance realizations and counting the number of violations
of this constraint results in an empirical satisfaction rate
during the first 10 time steps of $76.62\%$,
which indicates that the individual state constraint is not satisfied with the specified probability in closed-loop.

The reason can be related to the fact that the individual tightening can be interpreted as
a tightening with individual PRS for each constraint
(along Remark~\ref{rm:individualConstraints}) in the form of half-spaces.
These sets are clearly non-symmetric, such that the assumptions of
Theorem~\ref{lm:constraintSatisfaction} do not hold.
SMPC-c furthermore tightens the constraints based on a predicted error variance,
which is reset to $0$ whenever $M_1$, and can thereby only provide constraint
satisfaction guarantees in prediction.
Evaluating the same simulation runs w.r.t.\ the joint chance constraints,
corresponding to symmetric reachable sets, empirically shows that SMPC-c
fulfills the joint constraints~\eqref{eq:ex_stateConstr} in
closed-loop with a satisfaction rate of $71.52\%$,
which is significantly larger than the specified $p_x = 60\%$.
This can, however, not be systematically established,
as SMPC-c does not provide closed-loop guarantees.

In contrast, SMPC-prs with the symmetric PRS~\eqref{eq:ex_PRS} satisfies
the assumptions of Theorem~\ref{lm:constraintSatisfaction}
and therefore guarantees satisfaction of~\eqref{eq:ex_stateConstr} a-priori.
In fact, the empirical constraint satisfaction rate is $74.9\%$,
which is slightly higher than in SMPC-c, indicating that
the strong guarantees provided by Theorem~\ref{lm:constraintSatisfaction}
may come at a cost of higher conservatism.
%
\subsubsection{Unmodeled Disturbances}
%
\begin{figure}
	\center
	\setlength\figureheight{3.5cm}
	\setlength\figurewidth{7cm}
  \input{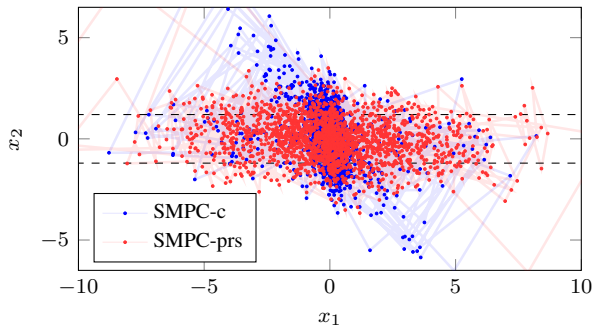}
	\caption{
  Comparison of the SMPC control approaches under unmodeled disturbances
  in every 10th time step.
  In red our approach (SMPC-prs) with update rule based on feasibility.
  In blue SMPC-c with update rule based on cost decrease.}\label{fg:ex_res}
\end{figure}
A second benefit of the proposed approach is the state feedback introduced by
the conditional update rule~\eqref{eq:condUpdate},
which can improve performance and constraint satisfaction
e.g.\ in the case of unmodeled disturbances.
To demonstrate this effect, we consider a system subject to a stronger,
unmodeled disturbance of variance $\Sigma = \diag([10, \, 1])$
at every 10th time step.
Again we compare our approach to SMPC-c, in which the nominal state
is set to the currently measured $x(k)$
only if it achieves a lower cost w.r.t.\ a Lyapunov function.

The results of the simulation are displayed in Figure~\ref{fg:ex_res}.
It is evident that SMPC-prs with its feasibility-based update rule
handles unmodeled disturbances gracefully, provided that the perturbed state
leads to a feasible optimization problem.
In the case of large disturbances, update schemes based on a Lyapunov decrease,
on the other hand, tend to apply the backup solution even if there exists a
feasible MPC solution.
As apparent in Figure~\ref{fg:ex_res} this can lead to significant
constraint violations.
In fact, in the immediate time steps after an unmodeled disturbance,
\mbox{SMPC-c} satisfies the state constraint in only $32.0\%$ of all cases,
while \mbox{SMPC-prs} does so in $72.0\%$,
satisfying the prescribed probability of $p_x = 60\%$.
%
\section{Conclusions}
\label{sec:conclusions}
%
We presented a stochastic MPC approach for LTI systems
with general additive stochastic disturbances,
which uses the concept of probabilistic reachable sets.
This enables a formulation of the MPC problem in terms of
a nominal system state with suitably tightened constraints.
Under a conditional update of the nominal system state we provided
an asymptotic average performance bound based on a cost decrease in expectation.
Results for closed-loop constraint satisfaction were presented
under the assumption that the uncertainty distribution is unimodal
and the probabilistic reachable set symmetric.
The simulation example highlights the benefits of increased feedback
provided by the proposed conditional update rule,
as well as the provided improved chance constraint satisfaction.
%

\appendices  
\section*{Appendix}
%
\begin{proof}[Proof of Lemma~\ref{lm:conv}:]
  Let $f_e$ and $f_x$ be the probability density functions of $e$ and $x$,
  respectively and $f_e * f_x$ their convolution.
\begin{align*}
  &\Pr(e + x \in \set{R}) = \int_\set{R} (f_e * f_x)(\bar{e}) d\bar{e} \\
   &= \int_{\set{R}} \int f_{e}(\bar{e}-\bar{x}) f_x(\bar{x}) d\bar{x} d\bar{e} \\
   &= \int f_x(\bar{x}) \int_{\set{R}} f_{e}(\bar{e}-\bar{x}) d\bar{e} d\bar{x} \\
   &= \int f_x(\bar{x}) \Pr(e + \bar{x} \in \set{R}) d\bar{x} \\
    &\leq \int f_x(\bar{x}) \Pr(e \in \set{R}) d\bar{x} = \Pr(e \in \set{R}) \, ,
\end{align*}
where the inequality follows from monotone unimodality.
\end{proof}
%
\begin{proof}[Proof of Theorem~\ref{lm:costDecrease}:]
Let $J(z,V)$ denote the cost of optimization problem~\eqref{eq:MPC}.
We split the expected optimal cost in cases where $M^1$ or $M^2$ apply:
\begin{align}
  &\E(J^*(z(k \! + \! 1)) \nonumber \\
    = & \E(J^*(z(k\!+\!1))|M^2) \Pr(M^2) \nonumber \\
    + &\E(J^*(z(k\!+\!1))|M^1) \Pr(M^1) \, ,\label{eq:combinedCost}
\end{align}
and find for the first term
\begin{align}
  &\E\left(J^*(z(k\!+\!1))\middle|M^2\right) = J^*(z_1\!(k)) \nonumber \\
    &\leq J(z_1\!(k), \bar{V})\, , \label{eq:costTerm1}
\end{align}
where
$\bar{V} = \{v^*_1\!(k), \ldots, v^*_{N-1}\!(k), Kz^*_N\!(k)\}$
denotes the shifted (feasible, but suboptimal) solution of the previous time step.
For the second term we have
\begin{align*}
  &\E \left(J^*(z(k\!+\!1))\middle|M^1\right) = \E \left(J^*(x(k\!+\!1))\middle|M^1\right)\\
  & \leq J^*(z_1(k)) + \E \left( L {\Vert x(k\!+\!1) - z_1(k) \Vert}_2 \middle|M^1 \right) \\
  & \leq J(z_1(k),\bar{V}) \\
  &\quad \quad \quad + \underbrace{ L/\sqrt{\lambda_{\min}(P)}}_{C} \E \left(\Vert x(k\!+\!1) - z_1(k) \Vert_P \middle|M^1 \right) \, ,
\end{align*}
where the first inequality follows from~\eqref{eq:lipschitz},
the second using the shifted suboptimal solution,
while the last uses the fact that $\lambda_{\min}(P) \Vert x \Vert^2_2 \leq \Vert x \Vert^2_P$.

Adding $C\E(\Vert x(k\!+\!1) - z_1(k)\Vert_P | M^2)$ to~\eqref{eq:costTerm1} and substituting the expressions for both modes in~\eqref{eq:combinedCost} we find
\begin{align*}
  &\E(J^*(z(k\!+\!1)) \\ \leq &J(z_1(k), \bar{V}) + C \E \left(\Vert x(k\!+\!1) - z_1(k) \Vert_P \right) \, .
\end{align*}
We can evaluate the expected value as
\begin{align*}
  \E\left(\Vert x(k\!+\!1) - z_1(k) \Vert_P \right)
  = &\E\left(\Vert (A+BK)e(k) + w(k) \Vert_P \right) \\
  \leq & \Vert (A + BK)e(k) \Vert_P + \E \left( \Vert w(k) \Vert_P \right) \\
  \leq & (1 - \epsilon) \Vert e(k) \Vert_P + \E \left( \Vert w(k) \Vert_P \right) \, ,
\end{align*}
where
$\Vert (A + BK)e(k) \Vert_P - \Vert e(k) \Vert_P \leq - \epsilon \Vert e(k) \Vert_P$
from the choice of $P$ as the solution of the Lyapunov equation.
Combining this with the usual cost decrease due to the terminal cost
and constraint in the nominal MPC~\eqref{eq:nominal_decrease}, we get
\begin{align*}
&\E(J^*(z(k\!+\!1)) - J^*(z(k)) \nonumber \\
&\leq -\Vert z(k) \Vert_Q^2 -\Vert v(k) \Vert_R^2 - \epsilon C\Vert e(k) \Vert_P + C \E(\Vert w(k) \Vert_P) \, .
\end{align*}
\end{proof}

\bibliographystyle{IEEEtran}
\bibliography{../../Bibliography/Bibliography.bib}

\begin{thebibliography}{10}
\providecommand{\url}[1]{#1}
\csname url@samestyle\endcsname
\providecommand{\newblock}{\relax}
\providecommand{\bibinfo}[2]{#2}
\providecommand{\BIBentrySTDinterwordspacing}{\spaceskip=0pt\relax}
\providecommand{\BIBentryALTinterwordstretchfactor}{4}
\providecommand{\BIBentryALTinterwordspacing}{\spaceskip=\fontdimen2\font plus
\BIBentryALTinterwordstretchfactor\fontdimen3\font minus
  \fontdimen4\font\relax}
\providecommand{\BIBforeignlanguage}[2]{{%
\expandafter\ifx\csname l@#1\endcsname\relax
\typeout{** WARNING: IEEEtran.bst: No hyphenation pattern has been}%
\typeout{** loaded for the language `#1'. Using the pattern for}%
\typeout{** the default language instead.}%
\else
\language=\csname l@#1\endcsname
\fi
#2}}
\providecommand{\BIBdecl}{\relax}
\BIBdecl

\bibitem{Bemporad1999a}
A.~Bemporad and M.~Morari, ``{Robust model predictive control: A survey},''
  \emph{Robustness in identification and control}, vol. 245, pp. 207--226,
  1999.

\bibitem{Mesbah2016}
A.~Mesbah, ``{Stochastic Model Predictive Control: An Overview and Perspectives
  for Future Research},'' \emph{IEEE Control Systems}, vol.~36, no.~6, pp.
  30--44, 2016.

\bibitem{Farina2016}
M.~Farina, L.~Giulioni, and R.~Scattolini, ``{Stochastic linear Model
  Predictive Control with chance constraints - A review},'' \emph{Journal of
  Process Control}, vol.~44, pp. 53--67, 2016.

\bibitem{Cannon2011}
M.~Cannon, B.~Kouvaritakis, S.~V. Rakovic, and Q.~Cheng, ``{Stochastic tubes in
  model predictive control with probabilistic constraints},'' \emph{IEEE
  Transactions on Automatic Control}, vol.~56, no.~1, pp. 194--200, 2011.

\bibitem{Kouvaritakis2010}
B.~Kouvaritakis, M.~Cannon, S.~V. Rakovic, and Q.~Cheng, ``{Explicit use of
  probabilistic distributions in linear predictive control},''
  \emph{Automatica}, vol.~46, no.~10, pp. 1719--1724, 2010.

\bibitem{Korda2011}
M.~Korda, R.~Gondhalekar, J.~Cigler, and F.~Oldewurtel, ``{Strongly feasible
  stochastic model predictive control},'' \emph{Conference on Decision and
  Control and European Control Conference}, pp. 1245--1251, 2011.

\bibitem{Lorenzen2017}
M.~Lorenzen, F.~Dabbene, R.~Tempo, and F.~Allgöwer, ``{Constraint-Tightening
  and Stability in Stochastic Model Predictive Control},'' \emph{IEEE
  Transactions on Automatic Control}, vol.~62, no.~7, pp. 3165--3177, 2017.

\bibitem{Cannon2008}
M.~Cannon, B.~Kouvaritakis, and P.~Couchman, ``Mean-variance receding horizon
  control for discrete time linear stochastic systems,'' \emph{IFAC Proceedings
  Volumes}, vol.~41, no.~2, pp. 15\,321 -- 15\,326, 2008.

\bibitem{Cannon2009}
M.~Cannon, B.~Kouvaritakis, and X.~Wu, ``{Model predictive control for systems
  with stochastic multiplicative uncertainty and probabilistic constraints},''
  \emph{Automatica}, vol.~45, no.~1, pp. 167--172, 2009.

\bibitem{Farina2013}
M.~Farina, L.~Giulioni, L.~Magni, and R.~Scattolini, ``{A probabilistic
  approach to Model Predictive Control},'' \emph{Conference on Decision and
  Control}, pp. 7734--7739, 2013.

\bibitem{Farina2015}
------, ``{An approach to output-feedback MPC of stochastic linear
  discrete-time systems},'' \emph{Automatica}, vol.~55, pp. 140--149, 2015.

\bibitem{Paulson2017}
J.~A. Paulson, E.~A. Buehler, R.~D. Braatz, and A.~Mesbah, ``{Stochastic model
  predictive control with joint chance constraints},'' \emph{International
  Journal of Control}, vol.~0, no.~0, pp. 1--14, 2017.

\bibitem{Dharmadhikari1976}
S.~Dharmadhikari and K.~Jogdeo, ``{Multivariate unimodality},'' \emph{The
  Annals of Statistics}, vol.~4, no.~3, pp. 607--613, 1976.

\bibitem{dharmadhikari1988}
S.~Dharmadhikari and K.~Joag-Dev, \emph{{Unimodality, convexity, and
  applications}}.\hskip 1em plus 0.5em minus 0.4em\relax Elsevier, 1988.

\bibitem{Saumard2014}
A.~Saumard and J.~A. Wellner, ``{Log-concavity and strong log-concavity: A
  review},'' \emph{Statistics Surveys}, vol.~8, pp. 45--114, 2014.

\bibitem{Blanchini1999}
F.~Blanchini, ``{Set invariance in control},'' \emph{Automatica}, vol.~35,
  no.~11, pp. 1747--1767, 1999.

\bibitem{Hewing2018}
L.~Hewing, A.~Carron, K.~Wabersich, and M.~N. Zeilinger, ``{On a Correspondence
  between Probabilistic and Robust Invariant Sets for Linear Systems},''
  \emph{European Control Conference}, 2018.

\bibitem{Rawlings2009}
J.~Rawlings and D.~Mayne, \emph{{Model Predictive Control: Theory and
  Design}}.\hskip 1em plus 0.5em minus 0.4em\relax Nob Hill Pub., 2009.

\bibitem{Blanchini2015}
F.~Blanchini and S.~Miani, \emph{{Set-Theoretic Methods in Control}}, 2nd~ed.,
  ser. Systems \& Control: Foundations \& Applications.\hskip 1em plus 0.5em
  minus 0.4em\relax Birkh\"{a}user Basel, 2015.

\bibitem{Bemporad2002}
A.~Bemporad, M.~Morari, V.~Dua, and E.~N. Pistikopoulos, ``{The explicit linear
  quadratic regulator for constrained systems},'' \emph{Automatica}, vol.~38,
  no.~1, pp. 3--20, 2002.

\end{thebibliography}
\end{document}